\documentclass{article}

\usepackage{blindtext} 

\usepackage[sc]{mathpazo} 
\usepackage[T1]{fontenc} 
\linespread{1.05} 
\usepackage{microtype} 
\usepackage{bm} 
\usepackage[english]{babel} 
\usepackage{todonotes}
\usepackage[hmarginratio=1:1,top=32mm,columnsep=20pt]{geometry} 
\usepackage[hang, small,labelfont=bf,up,textfont=it,up]{caption} 
\usepackage{booktabs} 

\usepackage{lettrine} 

\usepackage{enumitem} 
\setlist[itemize]{noitemsep} 

\usepackage{abstract} 

\usepackage{titlesec} 
\renewcommand\thesection{\Roman{section}} 
\renewcommand\thesubsection{\arabic{subsection}} 
\titleformat{\section}[block]{\large\scshape\centering}{\thesection.}{1em}{} 
\titleformat{\subsection}[block]{\large}{\thesubsection.}{1em}{} 

\usepackage{fancyhdr} 
\pagestyle{fancy} 
\fancyhead{} 
\fancyfoot{} 
\fancyfoot[RO,LE]{\thepage} 

\usepackage{titling} 
\usepackage{graphicx}

\usepackage{hyperref} 
\usepackage[normal,tight,center]{subfigure}
\usepackage{xcolor}
\usepackage{float}

\usepackage{amsmath}
\usepackage{amsthm}
\usepackage{amsfonts}
\newtheorem{thm}{Theorem}
\newtheorem{deft}{Definition}
\newtheorem{lem}{Lemma}

\usepackage{graphicx}

\newcommand\numberthis{\addtocounter{equation}{1}\tag{\theequation}}


\setlength{\droptitle}{-4\baselineskip} 

\pretitle{\begin{center}\Huge\bfseries} 
\posttitle{\end{center}} 
\title{Expectile Neural Networks for Genetic Data Analysis of Complex Diseases} 
\author{%
\textsc{Jinghang Lin$^{1, 2}$, Xiaoran Tong$^{2,3}$, Chenxi Li$^{3}$, Qing Lu$^{2}$}\\
\normalsize 1.Department of Statistics and Probability, Michigan State University, East Lansing, Michigan 48823, U.S.A.\\
\normalsize 2.Department of Biostatistics, University of Florida, Gainesville, Florida 32611, U.S.A.\\
\normalsize 3.Department of Epidemiology and Biostatistics, Michigan State University, East Lansing, Michigan 48832, U.S.A\\
\normalsize \href{mailto:linjingh@msu.edu}{linjingh@msu.edu,} \href{mailto:lucienq@ufl.edu}{lucienq@ufl.edu}
}
\date{} 


\begin{document}

\maketitle


\section{Introduction}
Converging evidence suggests that the genetic etiologies of complex diseases are highly heterogeneous \cite{GHHD}, and various genetic factors and environmental determinants could play different roles in subgroups of the population. Linear regression has been commonly used in genetic studies to investigate the effects of genetic variants on the mean of a continuous phenotype. However, if we are interested in a complete view of genetic effects across the entire distribution of phenotypes or are interested in investigating genetic contribution to a sub-population(e.g., a high-risk population), quantile regression and expectile regression are great alternative choices (\cite{RQ}, \cite{ALSE}). Quantile regression generalizes median regression and has been widely used in fields such as economics \cite{QRBC}, medicine (\cite{CAHT}, \cite{QRML}), and environmental science (\cite{ACSR}) to study entire conditional distributions of responses given covariates.  While quantile regression has many good properties (e.g., being robust to distribution assumption and outlies), as pointed out by Newey and Powell \cite{ALSE}, quantile regression has several limitations. First, quantile regression uses the check function with the absolute least error as loss function, which is not continuously differentiable and is computationally difficult for parameter estimation. Second, quantile regression is relatively inefficient for error distributions that are close to Gaussian or have low densities at the corresponding percentile. Third, it is challenging to estimate the density function values of quantile regression. 

To address these issues, Newey and Powell \cite{ALSE} proposes expectile regression, which uses the sum of asymmetric residual squares as the loss function. Since the loss function is convex and differentiable, expectile regression has a computational advantage over quantile regression. Similar to quantile regression, expectile regression makes no assumption on error distribution (e.g., homoscedasticity) and can be used to study the entire distribution of the responses. Expectile regression can be viewed as a generalization of linear regression. A typical expectile regression assumes a linear relationship between the expectile and the covariates, which may not be suitable for genetic data analysis as genetic variants likely influence phenotypes in a complicated manner (e.g., through interactions) (\cite{DGGI}). Simply considering linear and additive genetic effects can't fully take this complexity into account. 

In this paper, we integrate the idea of neural networks into expectile regression and develop an expectile neural network (ENN) method to model the complex relationship between genotypes and phenotypes. While several methods have been developed to integrate neural networks into quantile regression(\cite{NNRQ} ,\cite{QRNN} ,\cite{AQRNN}), few studies have been focused on investigating nonlinear expectile regressions, especially using neural networks. Compared to quantile regression neural networks(QRNN), ENN has several advantages. The empirical loss function in ENN is differentiable everywhere. Moreover, ENN can detect the heteroscedasticity in the data since ENN is more sensitive to extreme values than QRNN.

The rest of the paper is organized as follows: in Section 2, we review expectile regression and propose an ENN method. We then give an inequality that bounds the integrated squared error of an expectile function estimator in terms of risk functions. The proof of inequality is detailed in the Appendix. Simulations were conducted in Section 3 to evaluate the performance of the new method. In Section 4, we applied ENN to the SAGE data, studying genetic contribution to smoking quantity. We provide the summary and concluding remarks in Section 5. 

\section{Method}
In this section, we briefly introduce expectile regression and then propose an expectile neural network. Suppose we have $n$ samples,$\lbrace (\mathbf{x_i},y_i),i=1,...,n \rbrace$, where $\mathbf{x_i}=(1,x_{i,1},...,x_{i,p})^T$ and $y_i$ denote a $p-$dimensional covarites and the response for the $i$th sample, respectively. In this paper, the covariates are primarily genetic variants, such as single nucleotide polymorphisms (SNPs), which are typically coded as the number of minor frequent allele (e.g., AA=2, Aa=1, aa=0). The covariates $\mathbf{x_i}$ can also include personal characteristics (e.g., gender) and environmental determinants.

\subsection{Expectile regression}
Given the data, linear regression is commonly used to model the relationship between the covariates and the mean response. However, if we want to explore a complete relationship between the covarites and the response (e.g., genetic contribution to a high-risk population), an expectile regression can be used. The expectile regression for the $\tau-$expectile can be expressed as,
\begin{equation}
Expectile(\tau)= \mathbf{x}^T \hat{\bm{\beta}},    
\end{equation}
where $\hat{\bm{\beta}}$ is the estimator of coefficients $\bm{\beta}=(\beta_0, \beta_1,...,\beta_p)^T$.  The regression parameters, $\bm{\beta}$, can be obtained by minimizing an asymmetric $L_2$ loss function,
\begin{equation}
\mathcal{R}_{L_{\tau}}(\beta;\tau)=\frac{1}{n} \sum_{i=1}^{n}L_{\tau}(y_{i},\mathbf{x_i}^T\bm{\beta}),
\end{equation}
where $L_{\tau}(\cdot)$ is asymmetric squared loss with convex form
\begin{equation}
L(y_{i}, \mathbf{x_i}^T\bm{\beta})=\left\{
\begin{aligned}
&(1- \tau) (y_{i} - \mathbf{x_i}^T\bm{\beta})^2, & if \ y_{i} < \mathbf{x_i}^T\bm{\beta} \\
&\tau (y_i - \mathbf{x_i}^T\bm{\beta})^2, & if \ y_i \geq \mathbf{x_i}^T\bm{\beta}.
\end{aligned}
\right.
\end{equation}

For a model with a large $p$, a penalty term can be added to the risk function to reduce the model complexity, 
\begin{equation}
\mathcal{R}_{L_{\tau}}(\beta;\tau)=\frac{1}{n} \sum_{i=1}^{n}L_{\tau}(y_{i}-\mathbf{x_i}^T\bm{\beta}) + \lambda\sum_{i=1}^{p} \beta_{i}^2. 
\end{equation}

When $\tau = 0.5$, the corresponding expectile regression degenerates to a standard linear regression. Therefore, expectile regression can also be viewed as a generalization of linear regression.

\subsection{Expectile neural network}
A typical expectile regression model focuses on linear relationships between covariates and responses. In reality, the underlying relationship could be non-linear and involve complicated interactions among covariates. In order to model complex relationships between covariates and responses, we integrate the idea of neural networks into expectile regression and propose an ENN method. ENN can be considered as a nonparametric expectile regression or neural networks with asymmetric $L_2$ loss function, where we don't assume a particular functional form of covariates and use neural networks to approximate the underlying expectile regression function. We illustrate ENN with one hidden layer. The method can be easily extended to an expectile regression deep neural network with multiple layers.

\begin{figure}[h]
\begin{center}
\includegraphics[scale=0.6]{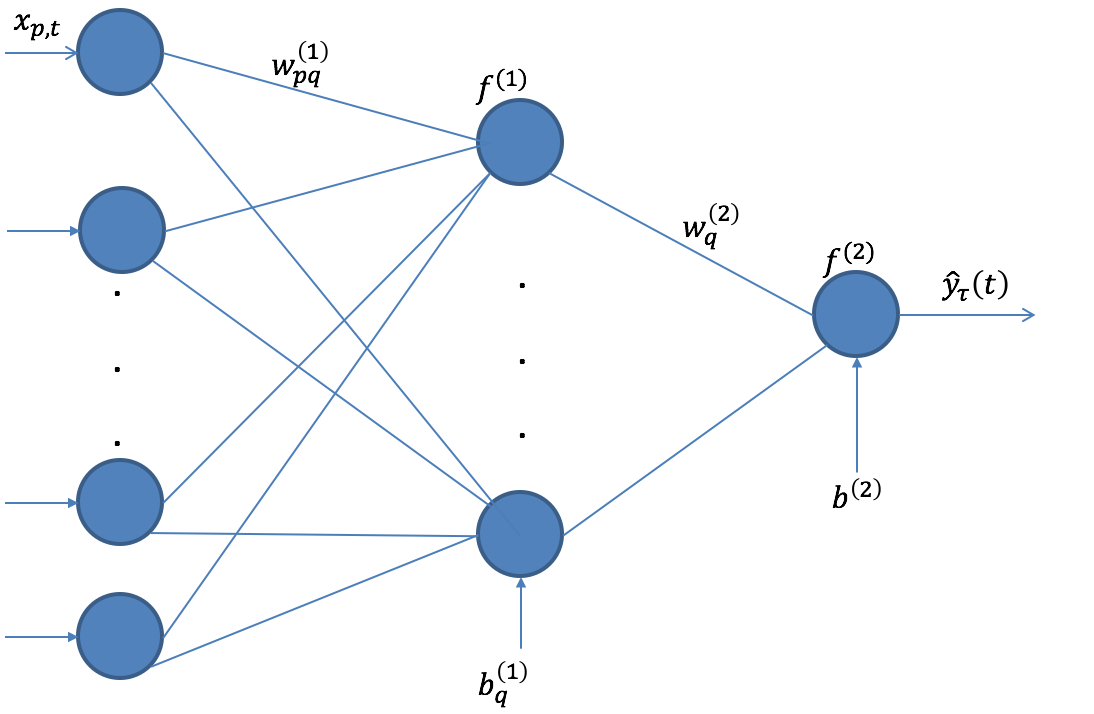}
\caption{A graphical representation of expectile neural network}
\end{center}
\end{figure}

Given the covariates $\mathbf{x}_t$, we first build the hidden nodes $h_{q,t}$,
\begin{equation}\label{hiddenlayer}
h_{q,t}= f^{(1)}(\sum_{p=1}^{P} x_{p,t}w_{pq}^{(1)}+b_{q}^{(1)}),  q= 1,...,Q, t = 1, ..., n.
\end{equation}
where $w_{pq}$ denotes weights and $b_{q}$ denotes the bias;  $f^{(1)}$ is the activation function for the hidden layer that can be a sigmoid function, a hyperbolic tangent function, or a rectified linear units(ReLU) function. Similar to hidden nodes in neural networks, the hidden nodes in ENN can learn complex features from covariates $\mathbf{x}$, which makes ENN capable of modelling non-linear and non-additive effects. Based on these hidden nodes, we can model the conditional $\tau$-expectile, $\hat{y}_{\tau}(t)$,
\begin{equation}\label{output}
\hat{y}_{\tau}(t)=f^{(2)}(\sum_{q=1}^{Q} h_{q,t}w_{q}^{(2)}+b^{(2)}),
\end{equation}
where $f^{(2)}$, $w_{q}^{(2)}$, and $b^{(2)}$ are the activation function, weights, and bias in the output layer, respectively. $f^{(2)}$ can be an identity function, a sigmoid function, or a rectified linear units(ReLU) function.  A graphical representation of ENN is given in Figure 1.

From equations (\ref{hiddenlayer}) and (\ref{output}), we can have the ENN model:
\begin{equation}
\hat{y}_{\tau}(t) = f^{(2)}(\sum_{q=1}^{Q} f^{(1)}(\sum_{p=1}^{P} x_{p,t}w_{pq}^{(1)}+b_{q}^{(1)})w_{q}^{(2)}+b^{(2)}). 
\end{equation}
To estimate $w_{pq}^{(1)}, b^{(1)}_{q}, w_{q}^{(2)}, b^{(2)}$, we minimize the empirical risk function

\begin{equation}
\mathcal{R}(\tau)=\frac{1}{n}\sum_{i=1}^{n} L_{\tau}(y_i,f(\mathbf{x_i})),
\end{equation}
where
\begin{equation}
L_{\tau}(y_{i}, f(\mathbf{x_i}))=\left\{
\begin{aligned}
&(1- \tau) (y_{i} - f(\mathbf{x_i}))^2, & if \ y_{i} < f(\mathbf{x_i}) \\
&\tau (y_i - f(\mathbf{x_i})))^2, & if \ y_i \geq f(\mathbf{x_i}).
\end{aligned}
\right.
\end{equation}
The model tends to be overfitted with the increasing number of covariates.  To address the overfitting issue, a $L_2$ penalty is added to the risk function,
\begin{equation}\label{loss}
\mathcal{R}(\tau)=\frac{1}{n}\sum_{i=1}^{n} L_{\tau}(y_i,f(\mathbf{x_i}))+ \lambda\sum_{p=1}^{P} \sum_{q=1}^{Q}\left((w_{pq}^{(1)})^2 + (w_{q}^{(2)})^2 \right)^2.
\end{equation}

The loss function for ENN is differentiable everywhere, and therefore we can obtain the estimator of ENN by using gradient-based optimization algorithms (e.g., quasi-Newton Broyden–Fletcher–Goldfarb–Shanno (BFGS) optimization algorithm).

\subsection{Theoretical result}
In ENN, $\tau-$expectiles $f_{L_{\tau},P}^{\ast}$  can be estimated by minimizing the asymmetric least squares (ALS) loss,
\begin{equation}
\mathcal{R}_{L_{\tau},P}(f_{L_{\tau},P}^{\ast})=\mathcal{R}_{L_{\tau},P}^{\ast}=inf \lbrace \mathcal{R}_{L_{\tau},P}(f)=\int_{X\times Y} L_{\tau}(y,f(x))dP(x,y)|f: X \rightarrow \mathbb{R}   \text{ measurable} \rbrace,
\end{equation}
where $P$ is the distribution on $X\times Y$ and $f: X \rightarrow \mathbb{R}$ is some predictor.

$f_{L_{\tau},P}^{\ast}$ can be estimated by minimizing the risk function. Intuitively, the convergence rate of $f_{L_{\tau},P}^{\ast}$ is related to risk function $\mathcal{R}_{L_{\tau},P}(f).$ The following theorem describe the upper bound and lower bound of error of $f_{L_{\tau},P}^{\ast}$.
\begin{thm}
Let $L_{\tau}$ be the ALS loss function and $P$ be the distribution on $X\times Y$. We further assume that $f_{L_{\tau},P}^{\ast} < \infty$ is the conditional $\tau-$expectile for fixed $\tau \in (0,1)$. Then, for an arbitrary neural network function $f$, we have
$$
C_{\tau}^{-1/2}(\mathcal{R}_{L_{\tau},P}(f)-\mathcal{R}_{L_{\tau},P}^{*})^{1/2} \leq || f-f_{L_{\tau},P}^{*} ||_{L_{2}(P_{\mathbf{x}})} \leq c_{\tau}^{-1/2}(\mathcal{R}_{L_{\tau},P}(f)-\mathcal{R}_{L_{\tau},P}^{*})^{1/2},
$$
where $c_{\tau}=min \lbrace \tau, 1-\tau \rbrace$, $C_{\tau}=max \lbrace \tau, 1-\tau \rbrace.$
\end{thm}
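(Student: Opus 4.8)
The plan is to reduce the global inequality to a pointwise statement in $x$ and then integrate. First I would use the tower property together with Fubini's theorem to write the excess risk as
$$\mathcal{R}_{L_{\tau},P}(f)-\mathcal{R}_{L_{\tau},P}^{\ast}=\int_{X}\Bigl(\mathcal{C}(f(x)\mid x)-\mathcal{C}(f_{L_{\tau},P}^{\ast}(x)\mid x)\Bigr)\,dP_{\mathbf{x}}(x),$$
where $\mathcal{C}(t\mid x):=\int_{Y}L_{\tau}(y,t)\,dP(y\mid x)$ is the inner (conditional) risk, and where $f_{L_{\tau},P}^{\ast}(x)$ minimizes $t\mapsto\mathcal{C}(t\mid x)$ for $P_{\mathbf{x}}$-almost every $x$ (this is the standard measurable-selection fact for losses of this Bregman type, and it is exactly the statement that $f_{L_{\tau},P}^{\ast}(x)$ is the conditional $\tau$-expectile). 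It therefore suffices to establish, for a.e.\ $x$ and every $t\in\mathbb{R}$, the pointwise two-sided bound
$$c_{\tau}\,\bigl(t-f_{L_{\tau},P}^{\ast}(x)\bigr)^{2}\;\le\;\mathcal{C}(t\mid x)-\mathcal{C}(f_{L_{\tau},P}^{\ast}(x)\mid x)\;\le\;C_{\tau}\,\bigl(t-f_{L_{\tau},P}^{\ast}(x)\bigr)^{2}.$$

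Next I would analyze $g(t):=\mathcal{C}(t\mid x)$ for fixed $x$. Writing $g(t)=\tau\int_{[t,\infty)}(z-t)^{2}\,dP(z\mid x)+(1-\tau)\int_{(-\infty,t)}(z-t)^{2}\,dP(z\mid x)$ and differentiating under the integral sign (the terms coming from the moving integration limits vanish because the integrand $(z-t)^{2}$ is zero at $z=t$), one obtains $g'(t)=-2\tau\int_{[t,\infty)}(z-t)\,dP(z\mid x)-2(1-\tau)\int_{(-\infty,t)}(z-t)\,dP(z\mid x)$, which is locally absolutely continuous, and $g''(t)=2\tau\,P(Z\ge t\mid x)+2(1-\tau)\,P(Z<t\mid x)$ for a.e.\ $t$. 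Since this is a convex combination of $2\tau$ and $2(1-\tau)$, it satisfies $2c_{\tau}\le g''(t)\le 2C_{\tau}$. The conditional $\tau$-expectile $m:=f_{L_{\tau},P}^{\ast}(x)$ is precisely the point where $g'$ vanishes, so I would invoke the integral form of the first-order Taylor remainder, $g(t)-g(m)=\int_{m}^{t}(t-s)\,g''(s)\,ds$, together with $\int_{m}^{t}(t-s)\,ds=\tfrac12(t-m)^{2}$, to get exactly the displayed pointwise estimate. Integrating this over $x$ against $P_{\mathbf{x}}$ and recognizing $\int_{X}(f(x)-f_{L_{\tau},P}^{\ast}(x))^{2}\,dP_{\mathbf{x}}(x)=\|f-f_{L_{\tau},P}^{\ast}\|_{L_{2}(P_{\mathbf{x}})}^{2}$ yields $c_{\tau}\|f-f_{L_{\tau},P}^{\ast}\|_{L_{2}(P_{\mathbf{x}})}^{2}\le\mathcal{R}_{L_{\tau},P}(f)-\mathcal{R}_{L_{\tau},P}^{\ast}\le C_{\tau}\|f-f_{L_{\tau},P}^{\ast}\|_{L_{2}(P_{\mathbf{x}})}^{2}$; taking square roots and rearranging gives both stated inequalities. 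Note that the hypothesis that $f$ is a neural network is never used — any measurable $f\colon X\to\mathbb{R}$ works.

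The main obstacle is measure-theoretic bookkeeping rather than any deep idea: justifying differentiation under the integral sign and the Taylor-remainder identity requires enough integrability of the low-order moments of $P(\cdot\mid x)$, so one needs (at least implicitly) finiteness of the conditional second moments / of $\mathcal{R}_{L_{\tau},P}$ near the minimizer, beyond the stated $f_{L_{\tau},P}^{\ast}<\infty$; one must also handle the case where $P(\cdot\mid x)$ carries an atom at $t$ (harmless, since $(z-t)^{2}$ keeps $g$ of class $C^{1}$ in $t$ regardless), and one should check that the two sides of the final inequality are simultaneously finite or infinite so the statement is non-vacuous. In essence this is the self-calibration (variance) inequality for the asymmetric least-squares loss, in the spirit of Steinwart and Christmann, specialized to the explicit constants $c_{\tau}=\min\{\tau,1-\tau\}$ and $C_{\tau}=\max\{\tau,1-\tau\}$.
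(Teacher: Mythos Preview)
Your proposal is correct and shares the paper's overall skeleton: both reduce the risk inequality to a pointwise bound on the excess inner risk $\mathcal{C}(t\mid x)-\mathcal{C}(f_{L_{\tau},P}^{\ast}(x)\mid x)$, then integrate over $P_{\mathbf{x}}$ and take square roots. The paper packages the pointwise bound as a separate Lemma (their Lemma~1) and derives the Theorem from it exactly as you do.

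Where you genuinely diverge is in the proof of that pointwise bound. The paper argues by brute-force algebra: it expands $(y-t)^{2}=(y-t^{\ast}+t^{\ast}-t)^{2}$, splits the integration domain into $\{y<t^{\ast}\}$, $\{t^{\ast}\le y<t\}$, $\{y\ge t\}$, invokes the first-order condition $\tau\int_{y\ge t^{\ast}}(y-t^{\ast})\,dQ=(1-\tau)\int_{y<t^{\ast}}(t^{\ast}-y)\,dQ$, and after a page of manipulation obtains an exact expression for $\mathcal{C}_{L_{\tau},Q}(t)-\mathcal{C}_{L_{\tau},Q}(t^{\ast})$ (their equation~(14)), which is then bounded above and below termwise by replacing $\tau$ and $1-\tau$ with $c_{\tau}$ or $C_{\tau}$. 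Your route via $g''(t)=2\tau\,P(Z\ge t\mid x)+2(1-\tau)\,P(Z<t\mid x)\in[2c_{\tau},2C_{\tau}]$ and the integral Taylor remainder is markedly shorter and more conceptual: it makes transparent \emph{why} the constants are $\min\{\tau,1-\tau\}$ and $\max\{\tau,1-\tau\}$ (they are exactly the range of the half-second-derivative). The paper's computation, on the other hand, is entirely elementary algebra and sidesteps the measure-theoretic caveats you flag (differentiation under the integral sign, absolute continuity of $g'$, atoms). Your closing remark that nothing is specific to neural networks is also correct and worth stating.
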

Proof of this theorem can be found in the appendix of the paper.

\section{Simulation}
Simulation studies were conducted to compare the performance of ENN and ER under different settings. The genetic data used in the simulation is the real sequencing data from the $1000$ Genomes Project, located on Chromosome $17: 7344328 - 8344327$ \cite{DRA}. Totally $1000$ replicates were simulated for each simulation setting. In each replicate, we randomly selected a number of samples and SNPs from the $1000$ Genomes Project based on the simulation settings. Given the genotypes, we further simulated the phenotype by using different linear/non-linear functions or by assuming different types of interactions among SNPs or genes. \\

We divided the samples into training, validation, and testing sets. ENN and ER were applied to the training set to build models. While a variety of activation functions can be used in ENN, we choose ReLU due to its performance and computational advantage. Since the loss function of ENN is differentiable, we use the quasi-Newton BFGS optimization algorithm to estimate the parameters in ENN. We chose the starting point carefully to avoid the local minimum. To select a proper starting point, we generated a set of initial values from $U[-1,1]$, ran the algorithm for a few steps, and chose the initial values achieving the smallest loss as the initial values. Based on the initial values, the quasi-Newton BFGS optimization algorithm is implemented to iteratively estimate the parameters until the convergence criterion is satisfied. The models built on the training set were then applied to the validation set to choose the most parsimonious model with the optimal tuning parameter (i.e., $\lambda$). To choose the best $\lambda$, we use the grid search with values of 0,0.1,1,10,100.  This final model was then evaluated on the testing set by using the mean squared error (MSE).

\subsection{Simulation I - nonlinear relationship} 
In simulation I, we varied the relationships between genotypes and phenotypes, and compared the performances of ENN and ER. Specially, we considered the following four nonlinear functions to simulate the relationship between genotypes and phenotypes. For comparison purpose, we also include a linear function,
\begin{enumerate}
\item linear function: $y = \alpha + \epsilon, \alpha = \mathbf{x}^T\bm{\beta}$,
\item Hyperbolic function: $y = \frac{|\alpha|}{(1+ |\alpha|)} + \epsilon,\alpha = \mathbf{x}^T\bm{\beta}$,
\item Mixed function: $y = sin(\alpha) + 2*exp(-16\alpha^2) + \epsilon, \alpha = \mathbf{x}^T\bm{\beta}$,
\item Quadratic function: $y = \alpha^2 + \epsilon, \alpha = \mathbf{x}^T\bm{\beta}$,
\item Cubic function: $y = \alpha^3 + \epsilon, \alpha = \mathbf{x}^T\bm{\beta} $,
\end{enumerate}
where $\mathbf{x}$ is the vector of SNPs (coded as 0, 1 or 2), $\bm{\beta}$ represents the genetic effects generated from the uniform distribution of $U(-1,1)$, and $\epsilon \sim N(0,1)$. Totally $1000$ replicates were simulated. For each replicate, We randomly choose 500 samples and 50 SNPs from the 1000 Genomes Project.
\begin{figure}[h]
\begin{center}
\caption{Performance comparison between ENN and ER under various relationships between genotypes and phenotypes and different expectiles (i.e., 0.1, 0.25, 0.5, 0.75, and 0.9)}
\includegraphics[scale=0.4]{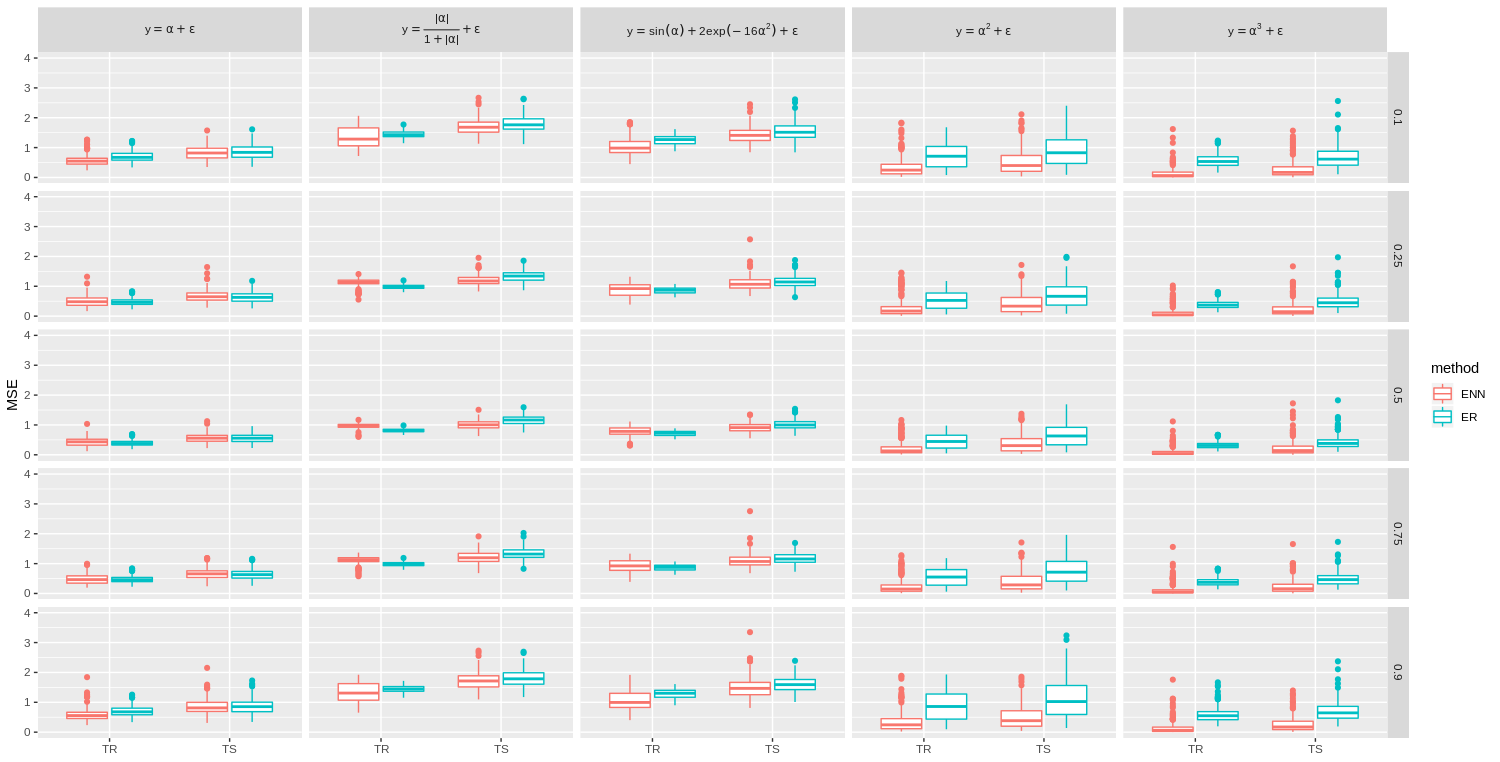}

\textbf{ENN:} expectile neural network; \textbf{ER:} expectile regression;
   \textbf{TR:} training; \textbf{TS:} testing; 
\end{center}
\end{figure}

The results from the simulation I are summarized in Figure 2. ENN outperforms ER in terms of MSE under four different nonlinear relationships, and has comparable performance with ER when the underlying relationship is linear. The pattern is consistent across different expectiles (i.e., 0.1, 0.25, 0.5, 0.75, and 0.9). While ENN outperforms ER for all four non-linear cases, ENN attains its best performance relative to ER when the underlying relationship is a high-order polynomial function (i.e., a cubic function).

\subsection{Simulation II - interactions among SNPs}
In simulation II, we considered three types of interactions, including a two-way multiplicative interaction, a two-way threshold interaction, and a three-way interactions \cite{GWSD}. Similar to simulation I, we simulated $1000$ replicates for each type of interaction. For each replicate, 500 samples and 50 SNPs were chosen from the 1000 Genomes Project. Among the 50 SNPs, we randomly selected $20\%$ of SNPs and simulated different types of interactions among the selected SNPs. Based on the simulated data, we compared MSEs of ENN and ER. For the comparison purpose, we also included a baseline model without any interaction.

\begin{figure}[H]
\begin{center}
\caption{Performance comparison between ENN and ER for different types of interactions and different expectiles (i.e., 0.1, 0.25, 0.5, 0.75, and 0.9)}
\includegraphics[scale=0.4]{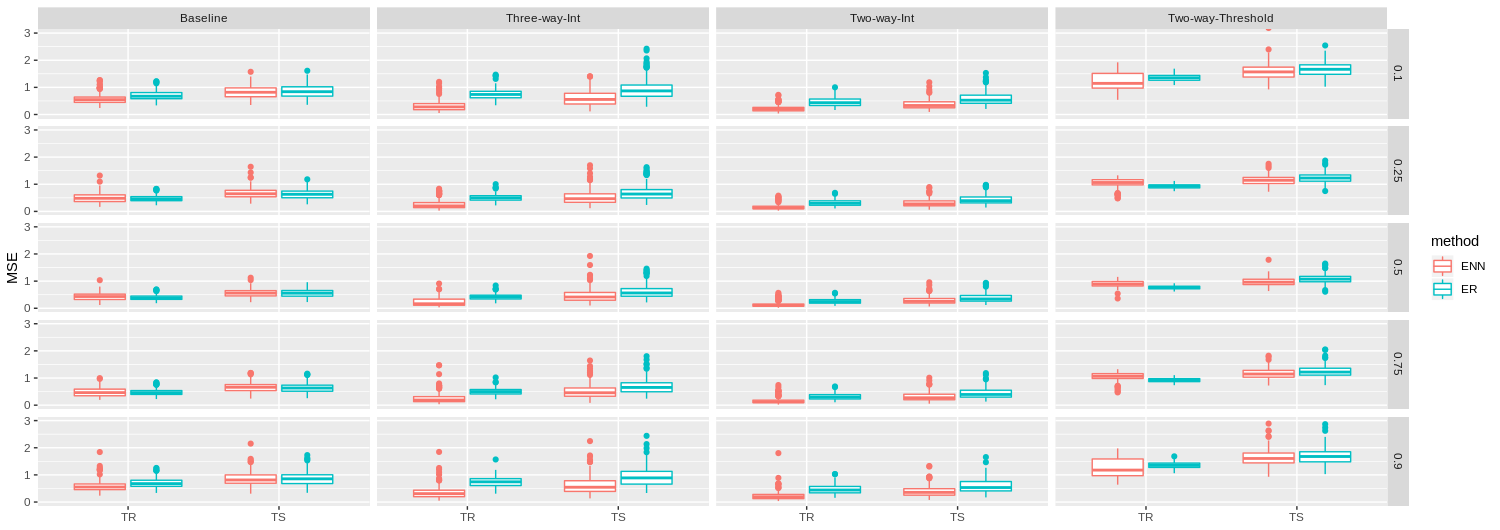}

 \textbf{ENN:} expectile neural network; \textbf{ER:} expectile regression;\textbf{TR:} training; \textbf{TS:} testing
\end{center}
\end{figure}

The results of the simulation II are summarized in Figure 3. Overall, ENN outperforms ER under all three interaction scenarios, indicating ENN's ability of taking interactions into account for improved performance. Among all interaction models, ENN attains its best performance relative to ER when there are three-way interactions. ENN also has more advantage over ER at the upper and lower expectiles (e.g., 0.1 and 0.9).  When there is no interaction, ENN has comparable performance with ER. 

\subsection{Simulation III - interactions between genes}
Investigating interactions among two or more than two genes is often interested in genetic studies. While a fully connected neural network can be built on all SNPs in the genes of interest, a neural network with a simpler architecture reflecting the underlying genetic data structure can be used to reduce the model's complexity and improve the model's performance. In this simulation, We illustrate the idea by modeling interactions between two genes with a non-fully connected architecture. In the non-fully connected architecture, the hidden units are only locally connected to SNPs in one gene (Figure 5). By using this simple architecture, we can reduce the number of parameters and build "gene-specific" hidden units to capture abstract features of a specific gene. To evaluate the performance of such an architecture, we simply simulated four SNPs for each gene, considered a two-way multiplicative interaction between two genes, and compared ENN with the non-fully connected architecture to ENN with a fully connected architecture.  
\begin{figure}[H]
\begin{center}
\caption{An alternative architecture for gene-gene interaction analyses}
\includegraphics[scale=0.6]{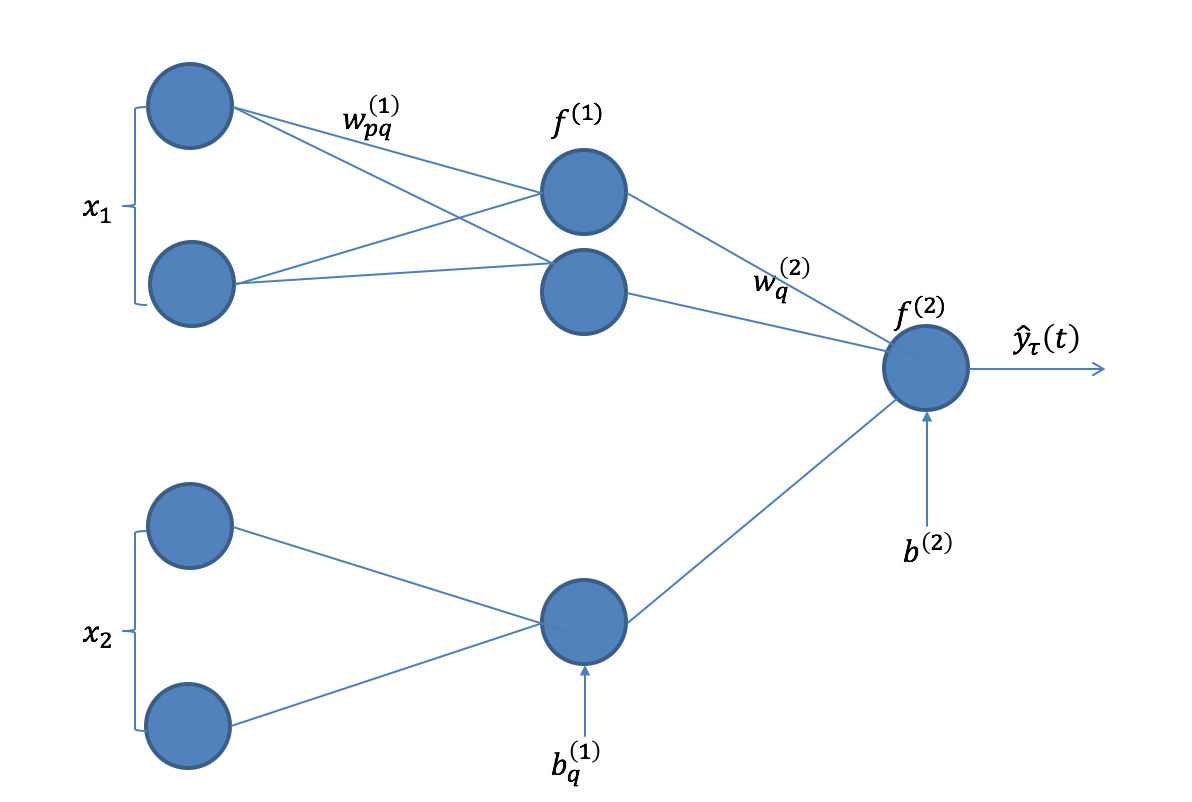}
\end{center}
\end{figure}

\begin{figure}[h]
\begin{center}
\caption{Performance comparison between ENN with a fully connected architecture and ENN with a non-fully connected architecture for gene-gene interaction analyses}
\includegraphics[scale=0.4]{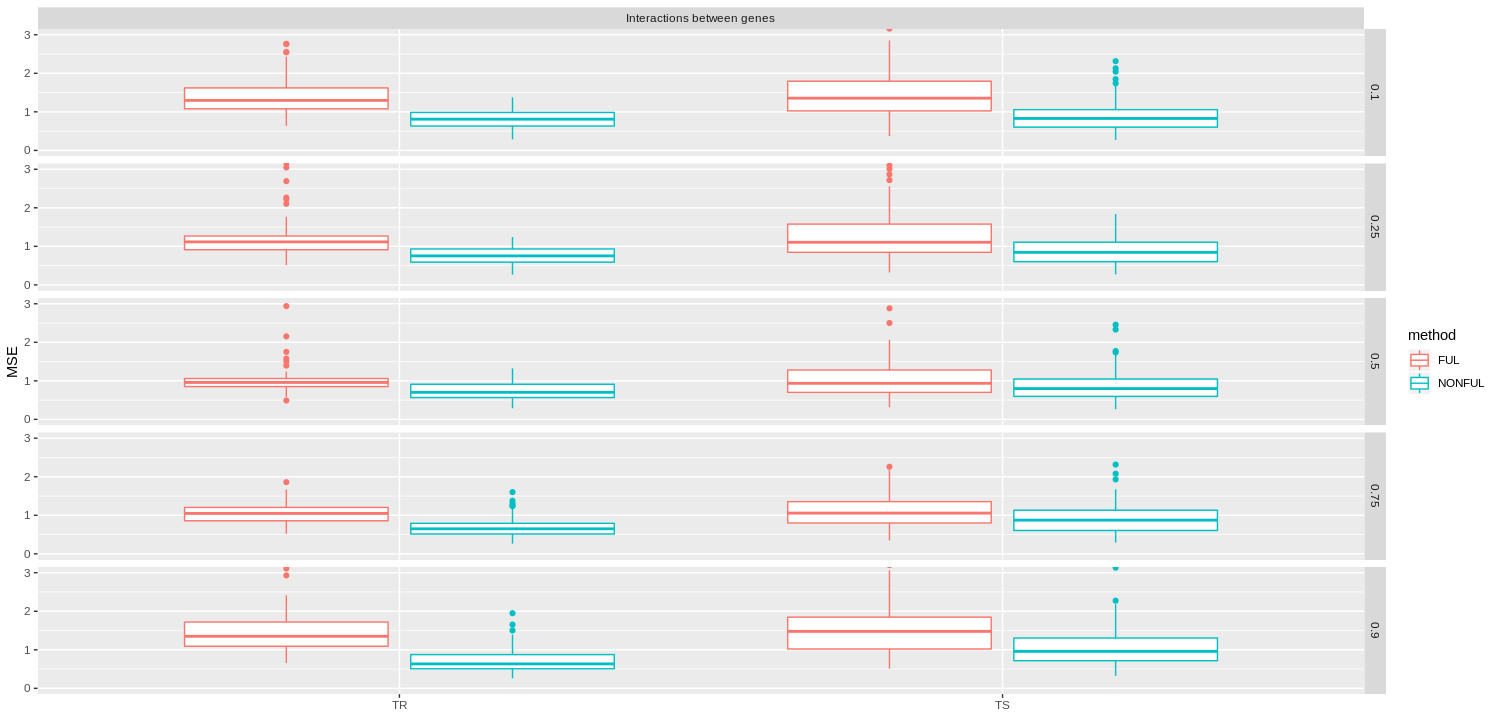}

\textbf{ENN:} expectile neural network; \textbf{TR:} training; \textbf{TS:} testing
\end{center}
\end{figure}
Figure 5 summarizes the results from simulation III. The results show that ENN with the Non-fully connected architecture attains lower MSE than ENN with the fully-connected architecture. As expected, the non-fully connected architecture requires fewer parameters and more reflects the underlying genetic data structure (i.e., genes are separate functional units), and therefore attains better performance than the fully-connected architecture. 

\section{Real data applications}
\subsection{The relationship between candidate SNPs with smoking quantities} 
We applied both ENN and ER to the genetic data from the Study of Addiction: Genetics and Environment(SAGE). The participants of the SAGE are selected from three large and complementary studies: the Family Study of Cocaine Dependence(FSCD), the Collaborative Study on the Genetics of Alcoholism(COGA), and the Collaborative Genetic Study of Nicotine Dependence(COGEND). In this application, we selected 155 SNPs, which were previously shown to have a potential role in nicotine dependence. After quality control, 149 SNPs remained for the analysis.  There are a total of 3897 samples in the SAGE data from different ethnic groups. We only included 3888 Caucasian and African American samples due to the small sample size of other ethnic groups.  Our interest is to use ENN and ER to build models on 149 SNPs, 3 covariates (i.e., sex, age, and race), and smoking quantities, which is measured by the largest number of cigarettes smoked in 24 hours. We divided the whole sample into the training, validation and test samples in the ratio of 3:1:1 to build the models, select the turning parameter, and evaluate the models, respectively.

\begin{table}[h]
\caption{The accuracy performance of two models built by ENN and ER based on 149 candidate SNPs and 3 covariates} \label{comparsion with two models}
\begin{center}
\begin{tabular}{|l|l|l|l|l|}
\hline
       & \multicolumn{2}{c|}{ENN} & \multicolumn{2}{c|}{ER} \\ \hline
$\tau$ & Train        & Test      & Train       & Test      \\ \hline
0.1    & 409.612       &  678.331         &  504.215           & 694.809          \\ \hline
0.25   & 346.118       &  579.164         &  394.836           & 588.759          \\ \hline
0.5    & 358.783       &  502.752         &  342.144           & 535.925          \\ \hline
0.75   & 344.399       &  604.969         &  421.955          &  613.676         \\ \hline
0.9    & 570.994       &  809.733         &  699.654          &  882.781         \\ \hline
\end{tabular}
\end{center}
\end{table}

Table 1 summarizes MSE of the models built by ENN and ER for five expectile levels (i.e., $\tau$= 0.1, 0.25, 0.5, 0.75, and 0.9). Table 1 shows that ENN outperforms ER, indicating the possibility of non-linear or non-additive effects among candidate SNPs and covariates. To provide a comprehensive view of the conditional distribution of smoking quantity, we ordered the expectiles estimated from ENN from lowest to highest and plotted their values for all five expectile levels. Figure 6 shows that the distributions of estimated expectiles are different across five expectile levels. When $\tau$= 0.5, ENN models the mean response, in which the estimated expectiles are similar for all individuals. Nonetheless, for high expectile levels (e.g., $\tau$= 0.9), the estimated expectiles vary among individuals and high-ranked individuals have much higher expectiles than low-ranked individuals.    

\begin{figure}[h]
\begin{center}
\caption{A comprehesive view of the conditional distribution of smoking quantity for five expectile levels (i.e., 0.1, 0.25, 0.5, 0.75, and 0.9)}
\includegraphics[scale=0.4]{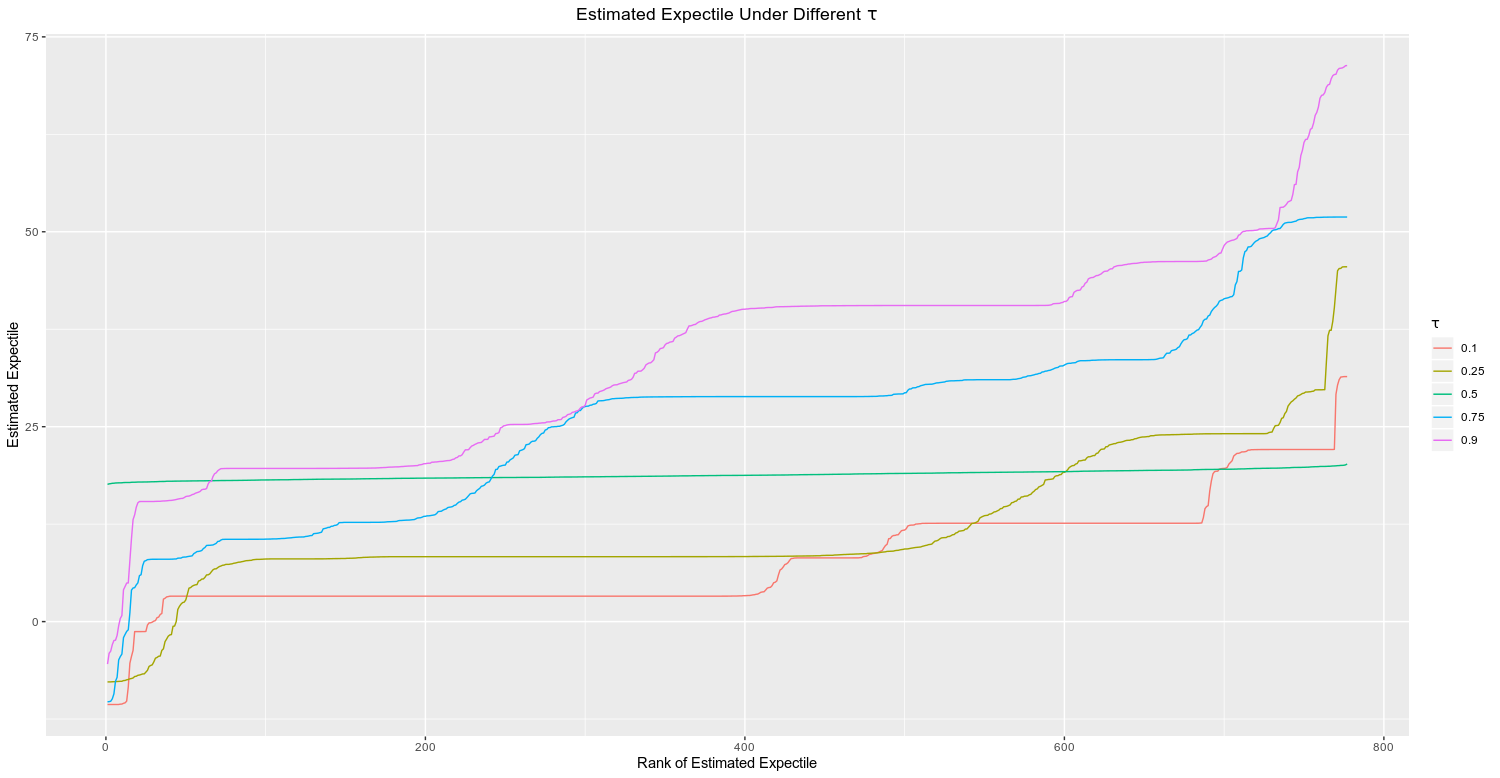}
\end{center}
\end{figure}

\subsection{Gene-gene interactions between the the $CHRNA5$–$CHRNA3$–$CHRNB4$ gene cluster} 
Previous evidence suggested potential interactions between the neuronal nicotinic acetylcholine receptors (nAChRs) subunit genes. In the second data analysis, we focused on the $CHRNA5$–$CHRNA3$–$CHRNB4$ gene cluster, and evaluated potential interactions by using ENN and ER. We consider three pairwise interactions between $CHRNA5$ and $CHRNA3$, $CHRNA5$ and $CHRNB4$, $CHRNA3$ and $CHRNB4$. The phenotype of interest in this analysis is the number of cigarettes smoked per day (CPD), which has been popularly used in the genetic study of nicotine dependence.

\begin{table}[h]
\caption{Evaluating a pairwise interaction between $CHRNA5$ and $CHRNA3$ by using ENN and ER} \label{comparsion with two models}
\begin{center}
\begin{tabular}{|l|l|l|l|l|}
\hline
       & \multicolumn{2}{c|}{ENN} & \multicolumn{2}{c|}{ER} \\ \hline
$\tau$ & Train        & Test      & Train       & Test      \\ \hline
0.1    & 1.106      &  2.022         &  1.183           & 2.036          \\ \hline
0.25   & 0.994       &  1.699         &  1.027           & 1.737          \\ \hline
0.5    & 0.896       &  1.266        &  0.908           & 1.304          \\ \hline
0.75   & 1.148       &  1.045         &  1.136          &  1.066         \\ \hline
0.9    & 2.015       &  1.335         &  2.069          &  1.357         \\ \hline
\end{tabular}
\end{center}
\end{table}

\begin{table}[h]
\caption{Evaluating a pairwise interaction between $CHRNA5$ and $CHRNB4$ by using ENN and ER} \label{comparsion with two models}
\begin{center}
\begin{tabular}{|l|l|l|l|l|}
\hline
       & \multicolumn{2}{c|}{ENN} & \multicolumn{2}{c|}{ER} \\ \hline
$\tau$ & Train        & Test      & Train       & Test      \\ \hline
0.1    & 1.139      &  2.020         &  1.186           & 2.049          \\ \hline
0.25   & 0.980       &  1.701         &  1.029           & 1.735          \\ \hline
0.5    & 0.901       &  1.277        &  0.908           & 1.305          \\ \hline
0.75   & 1.149       &  1.047         &  1.136          &  1.071         \\ \hline
0.9    & 2.054       &  1.318         &  2.070          &  1.351         \\ \hline
\end{tabular}
\end{center}
\end{table}

\begin{table}[h]
\caption{Evaluating a pairwise interaction between $CHRNA3$ and $CHRNB4$  by using ENN and ER} \label{comparsion with two models}
\begin{center}
\begin{tabular}{|l|l|l|l|l|}
\hline
       & \multicolumn{2}{c|}{ENN} & \multicolumn{2}{c|}{ER} \\ \hline
$\tau$ & Train        & Test      & Train       & Test      \\ \hline
0.1    & 1.133      &  2.019         &  1.183           & 2.035          \\ \hline
0.25   & 0.979       &  1.683         &  1.020           & 1.696          \\ \hline
0.5    & 0.892       &  1.278        &  0.896           & 1.279         \\ \hline
0.75   & 1.150       &  1.048         &  1.128          &  1.081         \\ \hline
0.9    & 2.020       &  1.342         &  2.040          &  1.386         \\ \hline
\end{tabular}
\end{center}
\end{table}

Tables 2-4 summarize MSE of the interaction models built by using ENN and ER for five expectile levels. For all 3 scenarios, expectile neural network outperforms expectile regression in terms of MSE. To graphically view the conditional distribution of CPD, we ranked the expectiles estimated from ENN and plotted the values against the estimated expectiles. Overall, the estimated expectiles tends to be similar when $\tau$= 0.5 (i.e., mean), while they are quite different for high expectile levels (e.g., $\tau$= 0.9). This suggest that the gene-gene interactions may play a more important role in models with high expectiles than the mean models.    

\begin{figure}
\begin{center}
\includegraphics[scale=0.39]{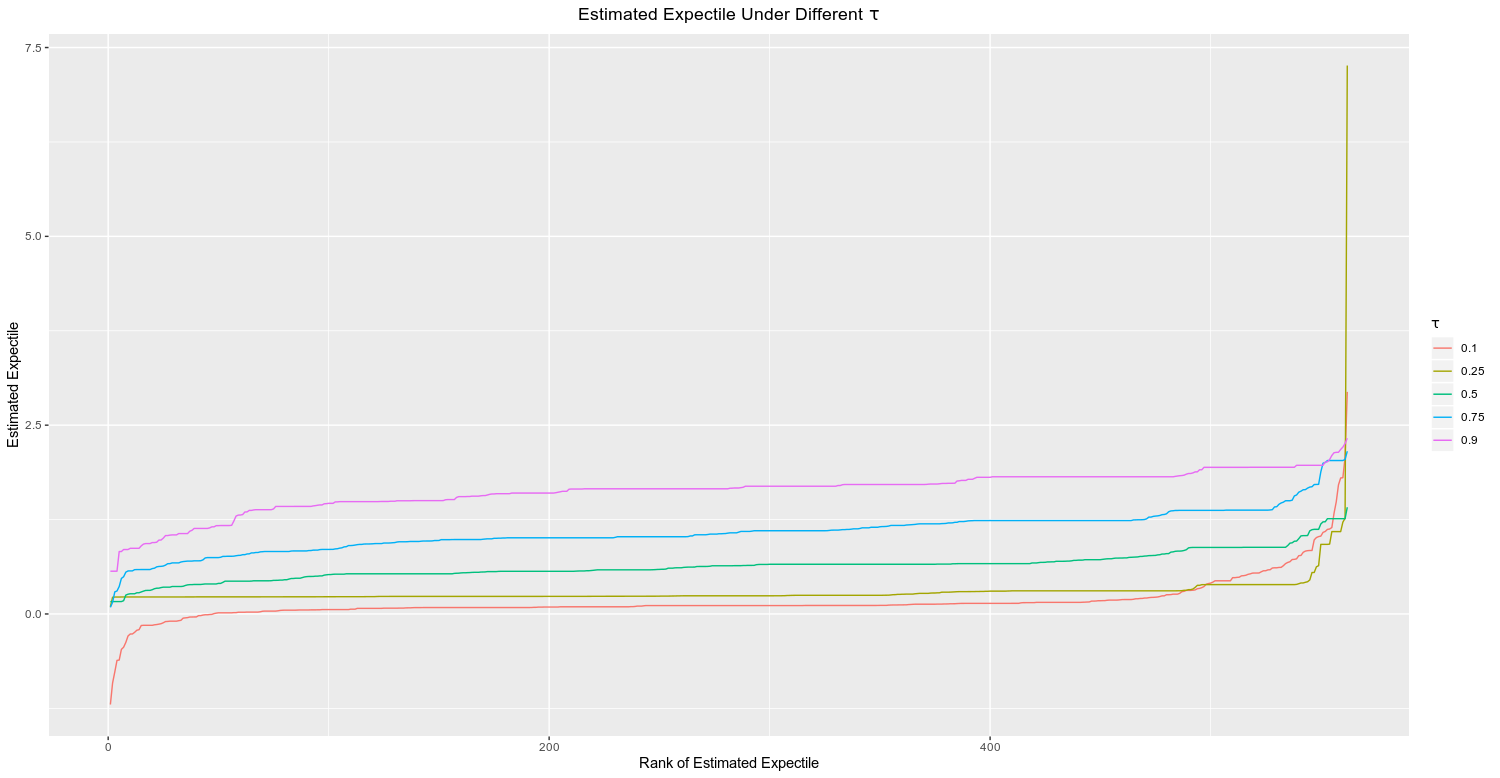}
\caption{The conditional distribution of CPD considering the interaction between $CHRNA5$ and $CHRNA3$}
\end{center}
\end{figure}

\begin{figure}
\begin{center}
\includegraphics[scale=0.39]{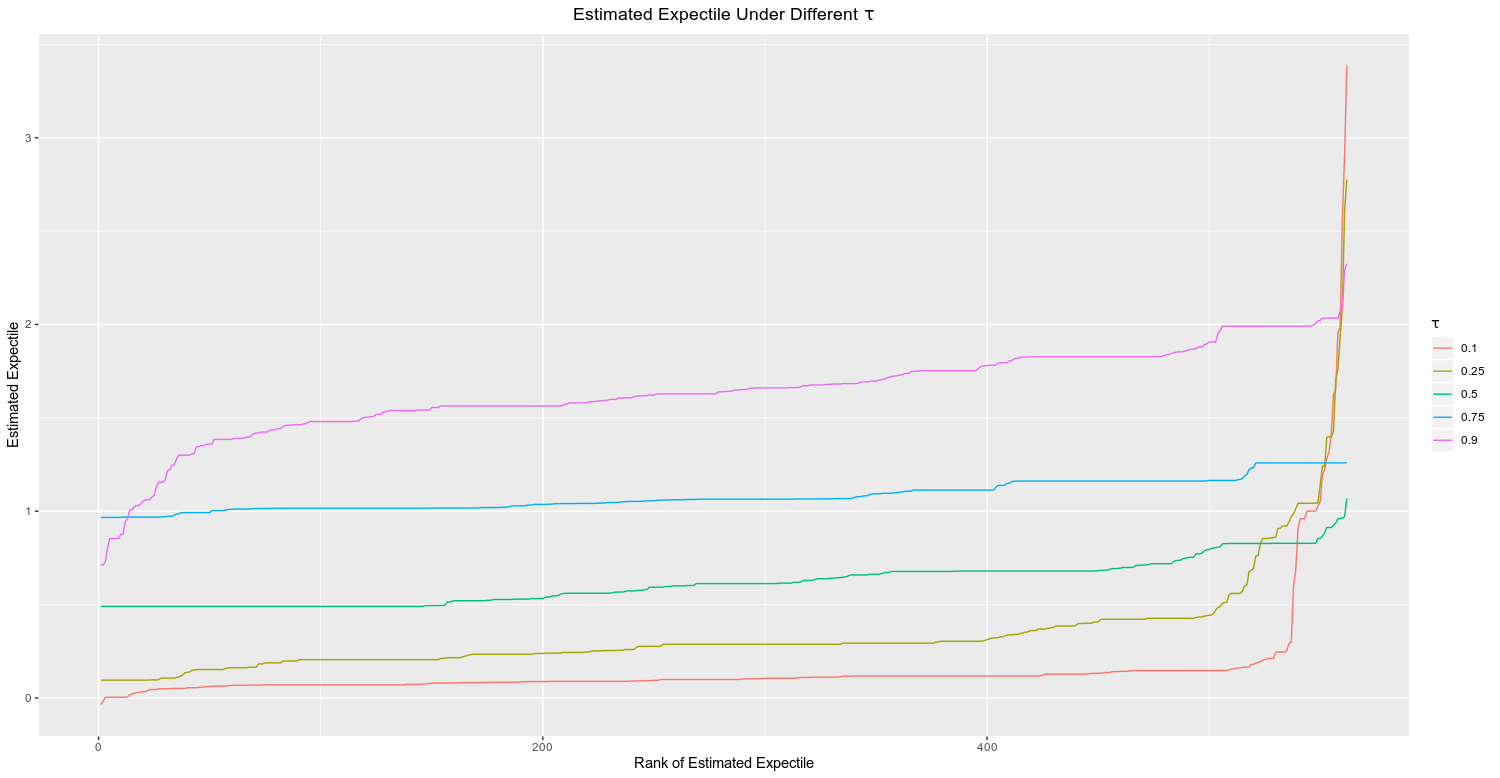}
\caption{The conditional distribution of CPD considering the interaction between $CHRNA5$ and $CHRNB4$}
\end{center}
\end{figure}

\begin{figure}
\begin{center}
\includegraphics[scale=0.39]{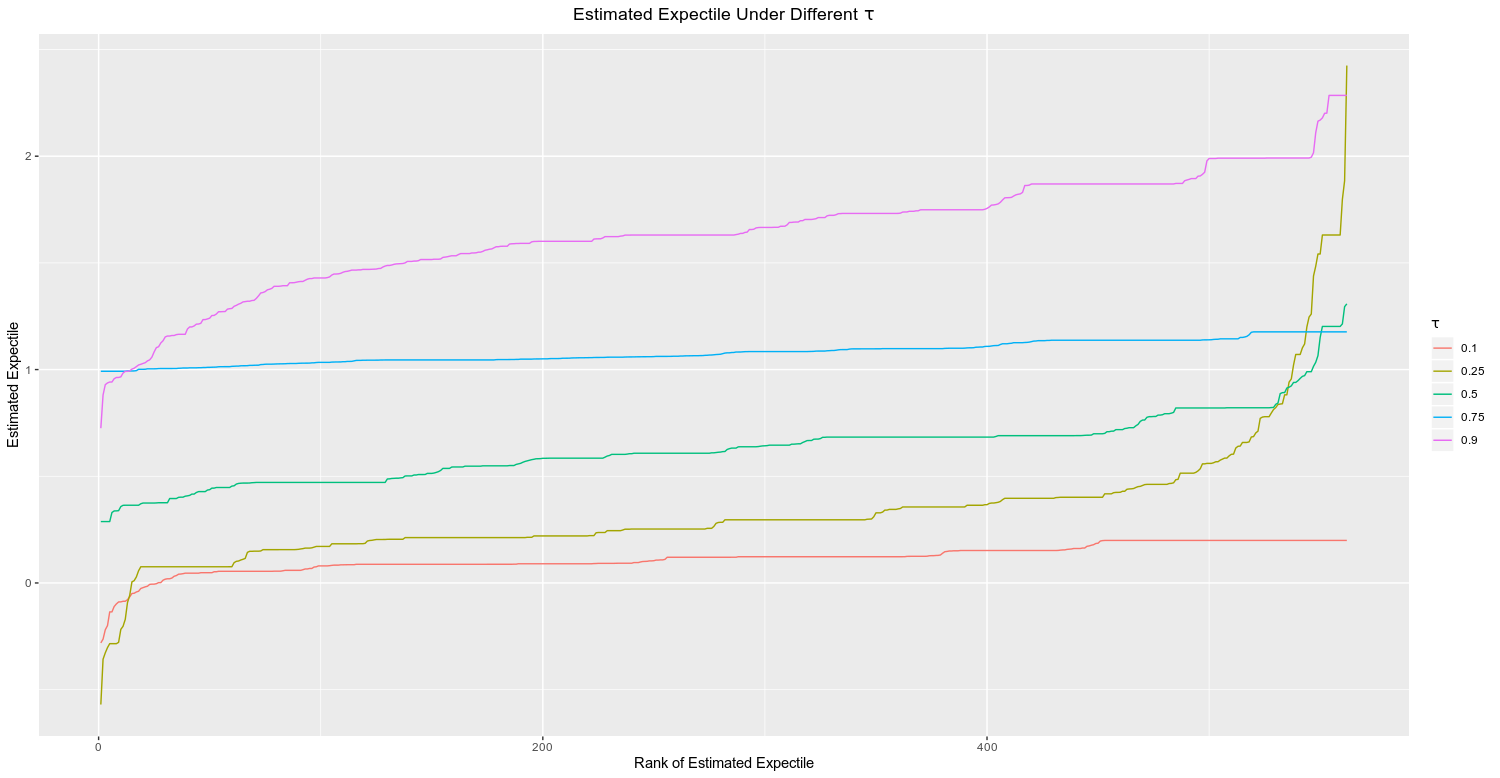}
\caption{The conditional distribution of CPD considering the interaction between $CHRNB4$ and $CHRNA3$}
\end{center}
\end{figure}

\section{Summary and discussion}
In this paper, we develop an ENN method, which inherits advantages from both neural networks and expectile regression. Using the hierarchical structure from neural networks, ENN can learn complex and abstract features from genotypes, making it suitable for modeling the complex relationship between genotypes and phenotype. Similar to ER, ENN can also explore the conditional distribution and provide a comprehensive view of the genotype-phenotype relationship.

Through simulations and a real data application, we demonstrate that ENN outperforms ER when there are non-additive and non-linear effects. Evidence also suggests that ENN has more advantages than ER when the model involves high-order interaction effects or non-linear effects. This may suggest ENN has improved performance when the underlying genotype-phenotype relationships become more complicated. The real data analysis shows that genetic effects can vary among different expertiles. Compared to the classical linear regression, ENN provides us more information about the genotype-phenotype relationship via the conditional distributions for different expectile levels. 

While regularization has been incorporated into ENN to avoid overfitting, ENN can still be subject to overfitting when the number of SNPs become extremely large (e.g., one million). To deal with such a large number of SNPs, we can model the overall genetic effect as a random effect and extend ENN, which is an interesting topic for future work. In this paper, we focus on introducing the ENN model and providing an inequality that bounds the integrated squared error of an expectile function estimator. Statistical properties of ENN (e.g., rate of convergence) are also important topics that worth further investigation in the future.

\section*{Acknowledgment}
This work was supported by NIH 1R01DA043501-01 and NIH 1R01LM012848-01. Funding support for the Study of Addiction: Genetics and 
Environment was provided through the NIH Genes, Environment and Health Initiative [GEI] (U01 HG004422). The SAGE datasets used for the 
analyses were obtained from dbGaP at $https://www.ncbi.nlm.nih.gov/projects/gap/cgi-bin/study.cgi?study_id=phs000092.v1.p1$ through dbGaP 
accession number phs000092.v1.p1.

\section{Appendix}
\begin{deft}
Let $L_{\tau}: Y\times \mathbb{R} \rightarrow [0,\infty)$ be the asymmetric least square loss function and $Q$ be a distribution on $Y=[-M, M]$. Then, the inner $L_{\tau}-risks$ of $Q$ could be defined as
$$
\mathcal{C}_{\tau,Q}(t)= \int_{Y} L_{\tau}(y,t)dQ(y), t = f(\mathbf{x_i}) \in \mathbb{R},
$$
and the minimal inner $L_{\tau}-risk$ is
$$
\mathcal{C}^{*}_{L_{\tau},Q}= inf_{t \in \mathcal{R}} \mathcal{C}_{L_{\tau},Q}(t)
$$
\end{deft}

\begin{lem}
Let $L_{\tau}$ be the asymmetric least square loss function and $Q$ be a distribution on $\mathbb{R}$ with $\mathcal{C}_{L_{\tau},Q}^{*}<\infty$. For a fixed $\tau \in (0,1)$ and for all $t \in \mathbb{R}$, we have
\begin{align*}
c_{\tau}(t-t^{*})^2 \leq \mathcal{C}_{L_{\tau},Q}(t)-\mathcal{C}^{*}_{L_{\tau},Q} \leq C_{\tau}(t-t^{*})^{2},
\end{align*}
where $c_{\tau}=min \lbrace \tau, 1-\tau \rbrace$ and $C_{\tau}= max \lbrace \tau, 1-\tau \rbrace$, $t^{*}$ is $\tau-$expectile .
\end{lem}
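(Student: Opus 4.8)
The plan is to regard the inner risk as a one–dimensional function $\phi(t):=\mathcal C_{L_\tau,Q}(t)=\int_Y L_\tau(y,t)\,dQ(y)$ and to exploit that $t^*$, being the $\tau$-expectile, is the global minimizer of $\phi$, so that $\mathcal C^*_{L_\tau,Q}=\phi(t^*)$ and $\phi'(t^*)=0$. The point is that $\phi$ is convex with a (Lebesgue–a.e.) second derivative pinched between $2c_\tau$ and $2C_\tau$; a Taylor-type expansion around $t^*$ then produces both inequalities at once. Throughout, the hypothesis $\mathcal C^*_{L_\tau,Q}<\infty$ forces $\int y^2\,dQ(y)<\infty$ (since $L_\tau(y,t)\ge c_\tau(y-t)^2$), which is all that is needed to differentiate under the integral sign and to know that the minimizer $t^*$ is finite with $\phi'(t^*)=0$.

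First I would differentiate. Using $\partial_t L_\tau(y,t)=2(1-\tau)(t-y)\mathbf{1}\{y<t\}+2\tau(t-y)\mathbf{1}\{y\ge t\}$ and dominated convergence (legitimate since $Q$ has a finite first moment), one gets
\[
\phi'(t)=2(1-\tau)\int_{\{y<t\}}(t-y)\,dQ(y)+2\tau\int_{\{y\ge t\}}(t-y)\,dQ(y).
\]
For each fixed $y$, the map $t\mapsto(1-\tau)(t-y)\mathbf{1}\{y<t\}+\tau(t-y)\mathbf{1}\{y\ge t\}$ is continuous and piecewise linear with slopes in $\{\tau,1-\tau\}$, hence $C_\tau$-Lipschitz in $t$; integrating, $\phi'$ is $2C_\tau$-Lipschitz and therefore absolutely continuous. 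Differentiating again where possible,
\[
\phi''(t)=2\bigl((1-\tau)Q(Y<t)+\tau Q(Y\ge t)\bigr),
\]
and since $Q(Y<t)+Q(Y\ge t)=1$, the elementary estimate $\min\{\tau,1-\tau\}\le(1-\tau)a+\tau(1-a)\le\max\{\tau,1-\tau\}$ for $a\in[0,1]$ gives $2c_\tau\le\phi''(t)\le 2C_\tau$ for a.e.\ (indeed every) $t$.

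Next I would integrate twice. Because $\phi'$ is absolutely continuous and $\phi'(t^*)=0$ is exactly the first-order condition characterizing the $\tau$-expectile, applying the fundamental theorem of calculus twice yields
\[
\phi(t)-\phi(t^*)=\int_{t^*}^{t}\!\!\int_{t^*}^{s}\phi''(u)\,du\,ds ,
\]
valid for $t$ on either side of $t^*$ once the orientation of the iterated integral is tracked. Substituting the bounds $2c_\tau\le\phi''(u)\le 2C_\tau$ into this signed double integral and using $\int_{t^*}^{t}\!\int_{t^*}^{s}du\,ds=(t-t^*)^2/2$ gives exactly
\[
c_\tau(t-t^*)^2\le\phi(t)-\phi(t^*)=\mathcal C_{L_\tau,Q}(t)-\mathcal C^*_{L_\tau,Q}\le C_\tau(t-t^*)^2,
\]
which is the claim.

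The only real subtlety, and the step I expect to need the most care, is that $\phi''$ may fail to exist at atoms of $Q$, so the Lagrange form of Taylor's theorem cannot be invoked naively; this is circumvented precisely by the integral form above, which requires only that $\phi'$ be absolutely continuous with an a.e.\ derivative bounded on both sides. Minor bookkeeping items are the sign flips in the iterated integral when $t<t^*$ (the bound is unchanged) and the verification that $t^*$ is an interior minimizer with $\phi'(t^*)=0$, which follows from the finite second moment of $Q$ together with strict convexity coming from $\phi''\ge 2c_\tau>0$. A fully elementary alternative avoids measure theory by expanding $L_\tau(y,t)-L_\tau(y,t^*)$ by cases according to the position of $y$ relative to $t$ and $t^*$, using the balance identity $\tau\int_{\{y\ge t^*\}}(y-t^*)\,dQ=(1-\tau)\int_{\{y<t^*\}}(t^*-y)\,dQ$ to kill the term linear in $t-t^*$, and then bounding the remaining weighted sum of squares by $c_\tau$ and $C_\tau$; this is the same computation unrolled, at the cost of more case analysis.
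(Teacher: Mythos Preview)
Your argument is correct, and it follows a genuinely different route from the paper. You treat the inner risk $\phi(t)=\mathcal{C}_{L_\tau,Q}(t)$ as a one-variable convex function, compute $\phi''(t)=2\bigl((1-\tau)Q(Y<t)+\tau Q(Y\ge t)\bigr)\in[2c_\tau,2C_\tau]$, note that $\phi'(t^*)=0$ is precisely the expectile balance equation, and then integrate twice via the absolutely continuous version of Taylor's formula. The paper instead carries out by hand the ``fully elementary alternative'' you sketch in your last paragraph: it splits into the cases $t\ge t^*$ and $t<t^*$, decomposes each integral over the three regions $\{y<t^*\}$, $\{t^*\le y<t\}$, $\{y\ge t\}$ (or their analogues), uses the identity $\tau\int_{y\ge t^*}(y-t^*)\,dQ=(1-\tau)\int_{y<t^*}(t^*-y)\,dQ$ to eliminate the term linear in $t-t^*$, and then checks that the remaining pieces recombine into $(t-t^*)^2$ times a convex combination of $\tau$ and $1-\tau$. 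Your calculus approach is cleaner and makes the mechanism transparent (bounded second derivative plus vanishing first derivative at $t^*$); the paper's approach is more elementary in that it never differentiates under the integral or invokes absolute continuity, at the price of several pages of algebraic case analysis. Either is sound, and your handling of the potential non-differentiability of $\phi'$ at atoms of $Q$ via the integral Taylor form is the right fix.
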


\begin{proof}
Let us fix $\tau \in (0,1)$. We use the result obtained in Newey and Powell $\cite{ALSE}$. For a distribution $Q$ on $\mathcal{R}$ satisfies $\mathcal{C}_{L_{\tau},Q}^{*} < \infty$, the $\tau-$expectile $t^{*}$ is the only solution of
\begin{align*}
\tau \int_{y \geq t^{*}} (y-t^{*}) dQ(y)= (1-\tau)\int_{y < t^{*}} (t^{*}-y) dQ(y). \numberthis
\end{align*}
First, We consider the lower bound.\\
To obtain the inner $L_{\tau}-$risks of Q, we consider two cases: $t \geq t^{*}$ and $t < t^{*}$.\\
When $t \geq t^{*}$, we have
\begin{align*}
\int_{y<t}(y-t)^{2}dQ(y) & = \int_{y<t}(y-t^{*}+t^{*}-t)^{2}dQ(y) \\
&=\int_{y<t}(y-t^{*})^{2}dQ(y)+2(t^{*}-t)\int_{y<t}(y-t^{*})dQ(y)+(t^{*}-t)^{2}Q((-\infty,t)) \\
&=\int_{y<t^*}(y-t^{*})^{2}dQ(y)+\int_{t^{*} \leq y <t}(y-t^{*})^{2}dQ(y )+(t^{*}-t)^{2}Q((-\infty,t))\\
& + 2(t^{*}-t) \int_{y<t^{*}}(y-t^{*})dQ(y)+2(t^{*}-t)\int_{t^{*} \leq y <t}(y-t^{*})dQ(y),
\end{align*}
and
\begin{align*}
\int_{y \geq t}(y-t)^{2}dQ(y)&=\int_{y\geq t^{*}}(y-t^{*})^{2}dQ-\int_{t^{*} \leq y < t}(y-t^{*})^{2}dQ(y)+(t^{*}-t)^{2}Q([t,\infty))\\
& 2(t^{*}-t)\int_{y\geq t^{*}}(y-t^{*})dQ(y)-2(t^{*}-t)\int_{t^{*}\leq y <t}(y-t^{*})dQ(y).
\end{align*}
By definition and (13), we have
\begin{align*}
\mathcal{C}_{L_{\tau},Q}(t) & = (1-\tau)\int_{y<t}(y-t)^{2}dQ(y)+\tau\int_{y \geq t}(y-t)^{2}dQ(y)\\
& = (1-\tau) \int_{y<t^{*}}(y-t^{*})^{2}dQ(y)+\tau \int_{y \geq t^{*}}(y-t^{*})dQ(y)\\
& + 2(t^{*}-t)((1-\tau) \int_{y<t^{*}}(y-t^{*})dQ(y)+\tau\int_{y \geq t^{*}} (y-t^{*})dQ(y))\\
& + (t^{*}-t)^{2}(1-\tau)Q((-\infty,t))+(t^{*}-t)^{2} \tau Q([t,\infty))\\
& + (1-2\tau)\int_{t^{*} \leq y <t}(y-t^{*})^{2}dQ(y)+2(1-2\tau)(t^*-t)\int_{t^{*} \leq y<t}(y-t^{*})dQ(y)\\
& = \mathcal{C}_{L_{\tau},Q}(t^{*}) + (t^{*}-t)^2(1-\tau)Q((-\infty,t))+(t^{*}-t)^{2}\tau Q([t,\infty))\\
& +(1-2\tau)\int_{t^{*} \leq y < t}(y-t^{*})^{2} + 2(t^{*}-t)(y-t^{*})dQ(y)\\
\end{align*}
Therefore,
\begin{align*}
&\mathcal{C}_{L_{\tau},Q}(t)-\mathcal{C}_{L_{\tau},Q}(t^{*})\\
& = (t^{*}-t)^{2}(1-\tau)Q((-\infty,t^{*})) + (t^{*}-t)^{2}(1-\tau)Q([t^{*}, t))+ (t^{*}-t)^{2}\tau Q([t,\infty))\\
& +(1-2\tau)\int_{t^{*} \leq y <t} (y-t^{*})^{2} +2(t^{*}-t)(y-t^{*})dQ(y)\\
& = (t^{*}-t)^{2}((1-\tau)Q((-\infty,t^{*}))+\tau Q([t, \infty))) - \tau \int_{t^{*} \leq y < t}(y-t^{*})^{2} + 2(t^{*}-t)(y-t^{*})dQ(y)\\
& + (t^{*}-t)^{2}(1-\tau)Q([t^{*},t)) + (1-\tau)\int_{t^{*} \leq t <t }(y-t^{*})^{2} + 2(t^{*}-t)(y-t^{*})dQ(y)\\
& = (t^{*}-t)^{2}((1-\tau)Q((-\infty,t^{*}))+ \tau Q([t,\infty)))-\tau\int_{t^{*} \leq y <t}(y-t^{*})(y+t^{*}-2t)dQ(y)\\
& + (1-\tau)\int_{t^{*} \leq y <t}(y-t^{*})^{2} + 2(t^{*}-t)(y-t^{*})+(t^{*}-t)^{2}dQ(y)\\
&= (t^{*}-t)^{2}((1-\tau)Q(-\infty,t^{*}))+\tau Q([t, \infty)))+ \tau \int_{t^{*} \leq y <t}(y-t^{*})(2t-t^{*}-y)dQ(y)\\
& (1-\tau)\int_{t^{*} \leq y <t}(y-t)^{2}dQ(y). \numberthis
\end{align*}
This leads to the lower bound of inner $L_{\tau}-$risk when $t \geq t^{*}$,
\begin{align*}
&\mathcal{C}_{L_{\tau},Q}(t)-\mathcal{C}_{L_{\tau},Q}(t^{*})\\
& \geq c_{\tau}(t^{*}-t)^{2}(Q((-\infty,t^{*}))+Q([t,\infty)))+c_{\tau} \int_{t^{*} \leq y \leq t}(y-t^{*})(2t-t^{*}-y)+(y-t)^2dQ(y)\\
& = c_{\tau}(t^{*}-t)^{2}(Q((-\infty,t^{*}))+Q([t,\infty)))+c_{\tau} \int_{t^{*} \leq y \leq t}(t^{*})^2-2tt^{*}+t^{2}dQ(y)\\
& = c_{\tau}(t^{*}-t)^{2}(Q((-\infty,t^{*}))+Q([t,\infty)))+c_{\tau}(t^{*}-t)^{2}Q([t^{*},t))\\
& = c_{\tau}(t^{*}-t)^{2}.
\end{align*}
When $t<t^{*}$, using similar arguments, we have
\begin{align*}
\mathcal{C}_{L_{\tau},Q}(t)-\mathcal{C}_{L_{\tau},Q}(t^{*}) & = (t^{*}-t)^{2}((1-\tau)Q((-\infty,t))+\tau Q([t^{*},\infty)))+\tau \int_{t \leq y <t^{*}}(y-t)^{2}dQ(y)\\
& + (1-\tau)\int_{t \leq y < t^{*}} (t^{*}-y)(y+t^{*}-2t)dQ(y)\\
& \geq c_{\tau}(t^{*}-t)^{2}.
\end{align*}
Therefore, we summarize them into one inequality
\begin{align*}
\mathcal{C}_{L_{\tau},Q}(t)-\mathcal{C}_{L_{\tau},Q}(t^{*}) \geq c_{\tau}(t^{*}-t)^{2}.
\end{align*}
Next, we consider the upper bound. Similarly, when $t \geq t^{*}$,
\begin{align*}
& \mathcal{C}_{L_{\tau},Q}(t)-\mathcal{C}_{L_{\tau},Q}(t^{*})\\
& \leq C_{\tau}(t^{*}-t)^{2}(Q((-\infty,t^{*}))+Q([t,\infty)))+C_{\tau}\int_{t^{*} \geq y <t }((y-t^{*})(2t-t^{*}-y)+(y-t)^{2})dQ(y)\\
& = C_{\tau}(t^{*}-t)^{2}. \numberthis
\end{align*}
For the case of $t < t^{*}$, the inequality still holds.
Combining these two inequality, we have
$$c_{\tau}(t-t^{*})^2 \leq \mathcal{C}_{L_{\tau},Q}(t)-\mathcal{C}^{*}_{L_{\tau},Q} \leq C_{\tau}(t-t^{*})^{2}.$$
\end{proof}

Based on the Lemma 1, we can prove Theorem 1.

\begin{proof}
If $x \in X$, we define $t = f(x)$ and $t^{*} = f_{L_{\tau},P}^{*}(x)$. By Lemma 1, for $Q=P(\cdot | x)$, we can get the following result
\begin{equation*}
C_{\tau}^{-1} \left(\mathcal{C}_{L_{\tau}, P(\cdot | x)}(f(x)))-\mathcal{C}^{*}_{L_{\tau},P(\cdot | x)}\right) \leq |f(x)-f^{*}_{L_{\tau},P}(x)|^2 \leq c_{\tau}^{-1}\left(\mathcal{C}_{L_{\tau}, P(\cdot | x)}(f(x)) - \mathcal{C}^{*}_{L_{\tau}, P(\cdot | x)}\right).
\end{equation*}
If we integrate it with respect to $P_{X}$ and take the square root, we can get the final result.
\end{proof}

\end{document}